\journal{Journal of \LaTeX\ Templates}
\newcommand{\norm}[1]{ \left \lVert #1 \right \rVert_p}
\renewcommand{\L}{\mathcal{L}}
\newcommand{\lat}{\mathcal{L}}
\newcommand{\cL}{\mathcal{L}}
\newcommand{\R}{\mathbb{R}}
\newcommand{\Z}{\mathbb{Z}}
\newcommand{\SVP}{\mathsf{SVP}}
\newcommand{\SIVP}{\mathsf{SIVP}}
\newcommand{\GapSIVP}{\mathsf{GapSIVP}}
\newcommand{\CVP}{\mathsf{CVP}}
\newcommand{\GapCVP}{\mathsf{GapCVP}}
\newcommand{{\GaptwoSAT}}{\mathsf{Gap\text{-}2\text{-}SAT}}
\newcommand{{\GapthreeSAT}}{\mathsf{Gap\text{-}3\text{-}SAT}}
\newcommand{\kSAT}{\mathsf{k\text{-}SAT}}
\newcommand{\GapkSAT}{\mathsf{Gap\text{-}k\text{-}SAT}}
\newcommand{\bt}{\mathbf{t}}
\newcommand{\eps}{\varepsilon}
\newcommand{\spn}{\text{span}}
\newtheorem{theorem}{Theorem}
\newtheorem{defn}{Definition}
\title{A Note on the Concrete Hardness of the Shortest Independent Vector in Lattices}
\author{Divesh Aggarwal}
\address{Department of Computer Science and Centre for Quantum Technologies, NUS}
\ead{dcsdiva@nus.edu.sg}
\author{Eldon Chung}
\address{Department of Computer Science, NUS}
\ead{eldon.chung@u.nus.edu.sg}
\begin{document}

%

\begin{abstract}
Bl{\"o}mer and Seifert~\cite{Blomer:1999:CCS:301250.301441} showed that $\SIVP_2$ is NP-hard to approximate by giving a reduction from $\CVP_2$ to $\SIVP_2$ for constant approximation factors as long as the $\CVP$ instance has a certain property. In order to formally define this requirement on the $\CVP$ instance, we introduce a new computational problem called the Gap Closest Vector Problem with Bounded Minima. We adapt the proof of~\cite{Blomer:1999:CCS:301250.301441} to show a reduction from the Gap Closest Vector Problem with Bounded Minima to $\SIVP$ for any $\ell_p$ norm for some constant approximation factor greater than $1$.

In a recent result, Bennett, Golovnev and Stephens-Davidowitz~\cite{BGS17} showed that under Gap-ETH, there is no $2^{o(n)}$-time algorithm for approximating $\CVP_p$  up to some constant factor $\gamma \geq 1$ for any $1 \leq p \leq \infty$.  We observe that the reduction in~\cite{BGS17} can be viewed as a reduction from $\GapthreeSAT$ to the Gap Closest Vector Problem with Bounded Minima. This, together with the above mentioned reduction, implies that, under Gap-ETH, there is no randomised $2^{o(n)}$-time algorithm for approximating $\SIVP_p$  up to some constant factor $\gamma \geq 1$ for any $1 \leq p \leq \infty$.
\end{abstract}
\maketitle
\section{Introduction}
A lattice $\lat \subset \R^d$ is the set of integer linear combinations  
    \[
    \lat := \lat(\mathbf{B}) = \{z_1 \vec{b}_1 + \cdots + z_n \vec{b}_n \ : \ z_i \in \Z \}
    \]
    of linearly independent basis vectors $\mathbf{B} = (\vec{b}_1,\ldots, \vec{b}_n) \in \R^{d \times n}$. 
 We call $n$ the \emph{rank} of the lattice $\lat$ and $d$ the \emph{dimension} or the \emph{ambient dimension} of the lattice $\cL$. 

For $i = 1, \ldots, n$, the $i^{th}$ successive minimum, denoted by $\lambda_i(\L)$, is the smallest $\ell$ such that there are $i$ non-zero linearly independent lattice vectors that have length at most $\ell$.\par

The Shortest Independent Vector Problem ($\SIVP$) takes as input a basis for a lattice $\lat \subset \R^d$ and $r > 0$ and asks us to decide whether the largest successive minima is at most $r$, i.e., $\lambda_n(\L) \le r$. Typically, we define length in terms of the $\ell_p$ norm for some $1 \leq p \leq \infty$, defined as
\[
\|\vec{x}\|_p := (|x_1|^p + |x_2|^p + \cdots + |x_d|^p)^{1/p}
\]
for finite $p$ and 
\[
\|\vec{x}\|_\infty := \max |x_i|
\; .
\]
We will drop the subscript in $\|\vec{x}\|_p$, when $p$ is clear from the context. 
We write $\SIVP_p$  for $\SIVP$ in the $\ell_p$ norm (and just $\SIVP$ when we do not wish to specify a norm).

Starting with the breakthrough work of Lenstra, Lenstra, and Lov{\'a}sz in 1982~\cite{LLL82}, algorithms for solving lattice problems in both its exact and approximate forms have found innumerable applications, including
factoring polynomials over the rationals~\cite{LLL82}, integer programming~\cite{Lenstra83,Kannan87,DPV11}, cryptanalysis~\cite{Shamir84,Odl90,JS98,NS01}, etc. More recently, many cryptographic primitives have been constructed whose security is based on the (worst-case) hardness of $\SIVP$ or closely related lattice problems \cite{Ajtai96,oded05,GPV08,Pei10,chris_survey}. In particular, the (worst-case) hardness of $\SIVP$ for $\text{poly}(n)$ approximation factors implies the existence of several fundamental cryptographic primitives like one-way functions, collision-resistant hash functions, etc (see, for example, \cite{goldreich1996collision}, \cite{os1998worst}). Such lattice-based cryptographic constructions are likely to be used on massive scales (e.g., as part of the TLS protocol) in the not-too-distant future~\cite{new_hope,frodo,NIST_quantum}. 

 Bl{\"o}mer and Seifert~\cite{Blomer:1999:CCS:301250.301441} showed that $\SIVP$ is NP-hard to approximate for any constant approximation factor. While their result is shown only for the Euclidean norm, their proofs can easily be extended to arbitrary norms. As is true for many other lattice problems, $\SIVP$ is believed to be hard to approximate up to polynomial factors in $n$, the rank of the lattice. In particular, the best known algorithms for $\SIVP$, even for $\text{poly}(n)$ approximation factors run in time exponential in $n$~\cite{ADRS15,ADS15}.

However, NP-hardness itself does not exclude the possibility of sub-exponential time algorithms since it merely shows that there does not exist a polynomial time algorithm unless P = NP.

To rule out such algorithms, we typically rely on a fine-grained complexity-theoretic hypothesis --- such as the Strong Exponential Time Hypothesis (SETH), the Exponential Time Hypothesis (ETH), or the Gap-Exponential Time Hypothesis (Gap-ETH). These  hypotheses were introduced in   \cite{IMPAGLIAZZO2001367}, and are by now quite standard in analyzing the concrete hardness of computational problems. 

To that end, a few recent results have shown quantitative hardness for the Closest Vector Problem ($\CVP_p$)~\cite{BGS17,ABGS19}, and the Shortest Vector Problem ($\SVP_p$)~\cite{AS18} which are closely related. In particular, assuming SETH,~\cite{BGS17,ABGS19} showed that there is no $2^{(1-\eps) n}$-time algorithm for $\CVP_p$ or $\SVP_\infty$ for any $\eps > 0$ and for $1 \leq p \leq \infty$ such that $p$ is not an even integer. Under ETH,~\cite{BGS17} showed that there is no $2^{o(n)}$-time algorithm for $\CVP_p$ for any $1 \leq p \leq \infty$. Also, under Gap-ETH,~\cite{BGS17} showed that there is no $2^{o(n)}$-time algorithm for approximating $\CVP_p$  up to some constant factor $\gamma \geq 1$ for any $1 \leq p \leq \infty$. Similar, but slightly weaker, results were obtained for $\SVP_p$ in~\cite{AS18}. 

\subsection{Our results and techniques.}

 Bl{\"o}mer and Seifert~\cite{Blomer:1999:CCS:301250.301441} showed that $\SIVP_2$ is NP-hard by giving a reduction from $\CVP_2$ to $\SIVP_2$. This reduction can easily be extended to all $\ell_p$ norms, and increases the rank of the lattice by $1$. Thus, combined with the SETH hardness result from~\cite{BGS17,ABGS19}, it implies the following observation.
    \begin{theorem}\label{thm:main1}
        Under the SETH,  there is no $2^{(1-\eps) n}$-time algorithm for $\SIVP_p$ for any $\eps > 0$ and for all $p \ge 1$ such that $p$ is not an even integer. 
    \end{theorem}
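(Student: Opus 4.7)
The plan is to compose two ingredients: (i) the Blömer–Seifert Karp reduction from $\CVP_p$ to $\SIVP_p$ that, as noted in the excerpt, increases the rank of the lattice by exactly one, and (ii) the SETH lower bound for $\CVP_p$ from \cite{BGS17,ABGS19} that holds for all $p \geq 1$ with $p$ not an even integer.

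First I would recall the Blömer–Seifert construction in the $\ell_2$ setting. Starting from a $\CVP_p$ instance $(\bb_1, \ldots, \bb_n, \bt, r)$ on a rank-$n$ lattice $\cL \subset \R^d$, we form a lattice $\cL'$ of rank $n+1$ by appending one new coordinate to each basis vector and to $\bt$, and adjoining one extra generator encoding the target. The new coordinate is chosen so that every nonzero lattice vector either already lives in (a copy of) $\cL$ or has a controlled contribution from the extra coordinate; the point is that an $(n+1)$-th short linearly independent vector exists in $\cL'$ if and only if some lattice point lies within distance $r$ of $\bt$ in $\cL$. I would then observe, as the excerpt asserts, that this construction carries over to arbitrary $\ell_p$ because $\|\cdot\|_p$ is computed coordinate-wise, so only cosmetic changes to the appended coordinate (rescaling) are needed to make the norm inequalities go through.

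Given the reduction, the deduction is immediate. Suppose for contradiction that for some $\eps > 0$ and some admissible $p$ there were a $2^{(1-\eps) n}$-time algorithm $A$ for $\SIVP_p$. Precomposing $A$ with the reduction yields an algorithm for rank-$n$ $\CVP_p$ running in time
\[
\mathrm{poly}(n) \cdot 2^{(1-\eps)(n+1)} \;=\; O\!\left(2^{(1-\eps') n}\right)
\]
for any $\eps' < \eps$, contradicting the SETH-hardness of $\CVP_p$ from \cite{BGS17,ABGS19}.

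The only delicate point — and the step I expect to require the most care — is verifying that the Blömer–Seifert construction really does extend to all $\ell_p$ with the rank growing by exactly one and with no parameter loss beyond multiplicative constants that SETH can absorb. Since the norm enters only through a coordinate-wise computation on the appended coordinate, this is essentially bookkeeping rather than a new idea, and I do not anticipate a substantive obstacle.
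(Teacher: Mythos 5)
Your proposal is correct and follows the same approach the paper takes: compose the Blömer--Seifert embedding of $\CVP_p$ into $\SIVP_p$ (which increases the rank by exactly one) with the SETH-hardness of $\CVP_p$ from~\cite{BGS17,ABGS19}. The paper treats this theorem as a one-line observation without a detailed proof; your sketch of the embedding, including the need to choose the appended coordinate so that both the short $(n{+}1)$st vector and the norm separation survive in $\ell_p$, is precisely the construction the paper works out in its Section~4 reduction (there stated for the gap version, where the choice of $\alpha$ and the bound $\lambda_n(\cL)^p \le \tau r^p$ are made explicit).
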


A closer look at their reduction reveals that it cannot be extended to showing NP-hardness of approximate $\SIVP$ directly (even though $\CVP$ is known to be NP-hard for almost polynomial approximation factors). The reason for this is that for the lattice $\cL$, when given as a part of a $\CVP$ instance, $\lambda_n(\cL)$ might be much larger than the distance of the target from the lattice, in which case, an oracle for approximating $\SIVP$ up to a constant factor, does not tell anything about the distance of the target from the lattice.

To overcome this difficulty, it was shown in ~\cite{Blomer:1999:CCS:301250.301441} that the $\CVP$ instance obtained from a reduction from the minimum label cover problem has a guarantee that for the CVP instance $(\cL, \mathbf{t})$, $\lambda_n(\cL)$ is ``not much larger" than the distance of $\bt$ from $\cL$.

We introduce a new computational problem called the Gap Closest Vector Problem with Bounded Minima ($\GapCVP^\tau$), which captures the above mentioned requirement on the CVP instance that $\lambda_n(\cL)$ has an upper bound depending on the parameter $\tau$. We observe that the reduction from $\GapthreeSAT$ to $\GapCVP$ in~\cite{BGS17} (which implies  hardness of $\GapCVP$) is actually a reduction from $\GapthreeSAT$ to $\GapCVP^\tau$ for an appropriate choice of $\tau$. We then show a reduction similar to~\cite{Blomer:1999:CCS:301250.301441} from $\GapCVP^\tau$ to $\SIVP$, which implies the following result.

\begin{theorem}
        Under the (randomised) Gap Exponential Time Hypothesis, for any $p \ge 1$, there exists $\gamma' > 1$, $\eps > 0$ such that $\gamma'$-$\SIVP_p$ with rank $n$ is not solvable in $2^{\eps n}$ time.
    \end{theorem}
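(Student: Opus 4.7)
The plan is to combine two reductions, both of which must have linear blowup in rank: (i) a reduction from $\GapthreeSAT$ on $m$ variables to $\GapCVP_p^\tau$ of rank $O(m)$ implicit in the work of Bennett, Golovnev and Stephens-Davidowitz~\cite{BGS17}, and (ii) the new reduction from $\GapCVP_p^\tau$ to $\SIVP_p$ announced earlier in the paper in the spirit of Bl\"omer--Seifert. Composing these with (randomised) Gap-ETH, which rules out $2^{o(m)}$-time randomised algorithms for $\GapthreeSAT$, will yield constants $\gamma' > 1$ and $\eps > 0$ such that $\gamma'$-$\SIVP_p$ of rank $n$ admits no $2^{\eps n}$-time randomised algorithm, which is exactly what we want.

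First I would inspect the $\GapthreeSAT \to \GapCVP_p$ reduction of~\cite{BGS17} and check that its output $(\cL, \bt)$ actually satisfies $\lambda_n(\cL) \le \tau \cdot \mathrm{dist}_p(\bt, \cL)$ for some absolute constant $\tau = \tau(p)$, so that the instance lies in $\GapCVP_p^\tau$. Concretely, the lattice in that reduction is generated by a small number of ``gadget'' blocks (one per variable, one per clause), and one exhibits $n$ linearly independent short lattice vectors by taking the natural basis vectors associated with the variable-encoding block; their $\ell_p$ norms are bounded by an absolute constant, while $\mathrm{dist}_p(\bt, \cL)$ in both the YES and NO cases is at least a constant independent of $m$ (after the standard scaling). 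This is a bookkeeping check on the construction already in~\cite{BGS17}, not a new reduction.

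Second, I would feed the resulting $\GapCVP_p^\tau$ instance into the Bl\"omer--Seifert-style reduction from $\GapCVP_p^\tau$ to $\SIVP_p$. Informally, one augments $\cL$ with the target $\bt$ as an extra scaled generator, producing a new lattice $\cL'$ of rank $n+1$ whose top successive minimum $\lambda_{n+1}(\cL')$ is tightly controlled by $\mathrm{dist}_p(\bt, \cL)$ in the YES case and bounded below in the NO case; the bounded-minima hypothesis is precisely what guarantees that $\lambda_n(\cL)$ does not drown out the signal from $\bt$, so that a $\gamma'$-approximation to $\lambda_{n+1}(\cL')$ decides the $\GapCVP_p^\tau$ instance for some $\gamma' > 1$. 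Because this reduction increases the rank by only an additive constant, a $2^{\eps' n}$-time algorithm for $\gamma'$-$\SIVP_p$ would give a $2^{\eps m}$-time algorithm for $\GapthreeSAT$ for a suitable $\eps > 0$.

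The main obstacle is the bounded-minima check in the first step: one has to verify that every intermediate scaling and every gadget in the~\cite{BGS17} construction preserves the property $\lambda_n(\cL) \le \tau \cdot \mathrm{dist}_p(\bt, \cL)$ simultaneously for all $1 \le p \le \infty$, since the constants hidden in $\ell_p$ versus $\ell_\infty$ comparisons can otherwise inflate $\tau$ uncontrollably. Once this is confirmed, composing the two reductions and appealing to Gap-ETH gives the theorem in one line. The randomised qualifier in the hypothesis is inherited from~\cite{BGS17}, whose reduction to $\GapCVP_p$ is randomised; our second reduction is deterministic, so no additional randomness is introduced.
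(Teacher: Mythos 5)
Your proposal is correct and essentially the same argument the paper gives: compose the chain $\GapthreeSAT \to \GaptwoSAT \to \GapCVP_p^\tau$ from~\cite{BGS17} (with the bounded-minima property verified for the resulting CVP instance, and randomness coming from the Dinur-style clause sparsification needed to make the rank linear in the number of $\GapthreeSAT$ variables) with the paper's rank-$(n{+}1)$ reduction $\GapCVP_p^\tau \to \GapSIVP_p$, and invoke Gap-ETH. The only cosmetic differences are that you fold the intermediate $\GaptwoSAT$ and sparsification steps into ``the BGS17 reduction,'' and your bounded-minima check should be stated as a bound on the ratio $\lambda_n/r$ (both the basis-vector norms and $r$ grow like $m^{1/p}$) rather than as an absolute-constant bound on the basis-vector norms themselves.
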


\section{Preliminaries}
    \subsection{Lattices}   
    Let $\R^n$ be a real vector space, with an $\ell_p$-norm on the vectors such that $\vec{v} \in \R^n, \norm{\vec{v}}^p := \sum_{i=1}^{n} |\vec{v}_i|^p$. 
    
    A lattice $\lat \subset \R^d$ is the set of integer linear combinations  
    \[
    \lat := \lat(\mathbf{B}) = \{z_1 \vec{b}_1 + \cdots + z_n \vec{b}_n \ : \ z_i \in \Z \}
    \]
    of linearly independent basis vectors $\mathbf{B} = (\vec{b}_1,\ldots, \vec{b}_n) \in \R^{d \times n}$. 
 We call $\mathbf{B}$ a basis of the lattice $\lat$, $n$ the \emph{rank} of the lattice,  and $d$ the \emph{dimension} of the lattice $\cL$. 
 If $n = d$, then we say that the lattice is full-rank.\par
    
    Since we wish to have inputs of bounded size, we assume that the coordinates of lattice vectors are rational numbers. Moreover, by appropriately scaling the lattice, we assume without loss of generality, that a $d$-dimensional lattice $\L$ is generated by basis vectors from $\mathbb{Z}^d$. 
    
    \subsection{Successive Minima}
    For a lattice $\L$ of rank $n$, for $1 \le i \le n$, we denote by $\lambda_i^{(p)}(\L)$ the $i^{th}$ successive minimum which is
    the smallest $\ell$ such that there are $i$ linearly independent lattice vectors that have $\ell_p$ norm at most $\ell$. More formally,
\[
    \lambda_i^{(p)}(\lat) := \min\{r \ : \ \dim \left(\spn(\{ \vec{y} \in \lat \ : \ \|\vec{y}\|_p \leq r\}) \right)\geq i\} \; .
\]
    We omit the superscript in $\lambda_i^{(p)}$, when the norm is clear from the context. 
    
    Minkowski's second theorem states the following with regards to the successive minima:\par
    
    \begin{theorem}
        For any $p \ge 1$, and for any full-rank lattice $\L$ we have that \[\left(\prod_{i=1}^n \lambda_i(\L) \right)^\frac{1}{n} \leq n^{\frac{1}{p}}(det(\L))^{\frac{1}{n}}\]
    \end{theorem}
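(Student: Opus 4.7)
The plan is to deduce the statement from the classical Minkowski's Second Theorem applied to the $\ell_\infty$ unit ball, combined with a trivial norm comparison.

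First I would invoke Minkowski's Second Theorem in the form valid for an arbitrary symmetric convex body $K\subset\R^n$: for any full-rank lattice $\L$, if $\mu_1\le\cdots\le\mu_n$ denote the successive minima of $\L$ with respect to the gauge of $K$, then $\prod_{i=1}^n \mu_i \cdot \mathrm{vol}(K) \le 2^n \det(\L)$. Specialising to $K=[-1,1]^n$, the unit ball of $\|\cdot\|_\infty$, which has volume $2^n$, the factors of $2^n$ cancel and we immediately obtain $\prod_{i=1}^n \lambda_i^{(\infty)}(\L) \le \det(\L)$.

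Next I would pass from the $\ell_\infty$ minima to the $\ell_p$ minima using the elementary inequality $\|\vec{v}\|_p \le n^{1/p}\|\vec{v}\|_\infty$, which holds for every $\vec{v}\in\R^n$. Concretely, if $\vec{v}_1,\dots,\vec{v}_i$ are linearly independent lattice vectors witnessing $\lambda_i^{(\infty)}(\L)$, then each satisfies $\|\vec{v}_j\|_p \le n^{1/p}\lambda_i^{(\infty)}(\L)$, so $\lambda_i^{(p)}(\L) \le n^{1/p}\lambda_i^{(\infty)}(\L)$ for every $i$. Multiplying these $n$ inequalities and substituting the $\ell_\infty$ bound yields $\prod_{i=1}^n \lambda_i^{(p)}(\L) \le n^{n/p}\det(\L)$, and taking $n$-th roots gives the claimed inequality.

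There is essentially no obstacle beyond correctly invoking the classical Minkowski bound for a general symmetric convex body; once that is in hand the remainder is a one-line norm comparison. The only point to get right is the direction of that comparison: the unit $\ell_p$ ball contains the scaled cube $n^{-1/p}[-1,1]^n$, equivalently $\|\cdot\|_p \le n^{1/p}\|\cdot\|_\infty$, which is exactly what produces the factor $n^{1/p}$ on the right-hand side of the stated bound.
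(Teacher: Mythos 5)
Your proof is correct. Note that the paper states this result without proof, treating it as the classical Minkowski's Second Theorem specialised to the $\ell_p$ norm, so there is no internal argument to compare against. Your derivation---invoking the general Minkowski bound $\prod_i \mu_i \cdot \mathrm{vol}(K) \le 2^n \det(\L)$ for a symmetric convex body $K$, specialising to the cube $K = [-1,1]^n$ to get $\prod_i \lambda_i^{(\infty)}(\L) \le \det(\L)$, and then applying $\|\vec{v}\|_p \le n^{1/p}\|\vec{v}\|_\infty$ coordinate-by-coordinate to obtain $\lambda_i^{(p)}(\L) \le n^{1/p}\lambda_i^{(\infty)}(\L)$---is a clean and standard route, and it does yield exactly the stated inequality after multiplying and taking $n$-th roots. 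One small remark: applying the general Minkowski bound directly to the $\ell_p$ ball (whose volume is $(2\Gamma(1+1/p))^n/\Gamma(1+n/p)$) would give a somewhat sharper constant of roughly $(n/p)^{1/p}$ in place of $n^{1/p}$, but the cube-plus-norm-comparison argument you give is simpler and suffices for the bound as stated.
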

    
    \subsection{Computational problems}
    \textbf{Gap-Closest Vector Problem ($\gamma$-$\GapCVP_p$)}: Given a lattice $\L$, a target vector $\vec{t} \in \mathbb{Z}^n$ (which may or may not be in the lattice) and a value $r>0$, output YES if there exists a vector $\vec{v}$ in the lattice such that $\norm{\vec{v} - \vec{t}\ } \leq r$ (i.e. the closest vector in the lattice to the vector $\vec{t}$ has a distance less than $r$ to the target), and output NO if all the vectors in the lattice have distance greater than $\gamma \cdot r$ to the target.\par
    
    \textbf{Gap-Closest Vector Problem with Bounded Minima ($\gamma$-$\GapCVP_p^\tau$)}: Given a lattice $\L$, a target vector $\vec{t} \in \mathbb{Z}^n$ (which may or may not be in the lattice) with the added guarantee that $\lambda_{n}(\L)^p \leq \tau r^p$, and a value $r$ output YES if there exists a vector $\vec{v}$ in the lattice such that $\norm{\vec{v} - \vec{t}\ } \leq r$ (i.e. the closest vector in the lattice to the vector $\vec{t}$ has a distance less than $r$ to the target), and output NO if all the vectors in the lattice have distance greater than $\gamma \cdot r$ to the target.

    \textbf{Gap-Shortest Independent Vector Problem ($\gamma$-$\SIVP_p$)}: Given a lattice $\L$, and value $r$, output YES if there exists a set of linearly independent vectors $\{\vec{v}_1, \vec{v}_2, ..., \vec{v}_n\}$ that are in $\L$ such that $\max_{i=1}^n \|\vec{v}_i \|_p \le r$, and output NO if for  all such sets, $\max_{i=1}^n \|\vec{v}_i \|_p > \gamma r$.\par

    If we want to talk about the exact variant of these problems (i.e., $\gamma = 1$), then we will omit the prefix $\gamma$.

    $\kSAT$: Given a boolean formula in conjunctive normal form over $n$ variables,  i.e. as a conjunction of $m$ clauses where each clause is a disjunction of $k$ literals, decide if there is an assignment of the $n$ variables such that the boolean formula evaluates to true.\par
    $(\delta, \: \eps)$-$\GapkSAT$: Given a boolean formula in conjunctive normal form with each clause having $k$ literals, and two parameters $0 \leq \delta < \eps \leq 1$, output YES if there exists an assignment such that it satisfies at least $\eps$ fraction of the clauses, and output NO if no assignment satisfies more than $\delta$ fraction of the clauses. For convenience at times the ($\delta$, $\eps$)-prefix may be omitted when it is clear from the context.\par

    \subsection{ETH, SETH and Gap-ETH-hardness}
The following hypotheses were introduced in \cite{IMPAGLIAZZO2001367}, and will be the basis of our hardness results.\par

    \begin{defn}[Exponential Time Hypothesis]
        The Exponential Time Hypothesis (ETH) states that for every $k \geq 3$ there exists a constant $\eps > 0$ such that no algorithm solves $\kSAT$ with $n$ variables in $2^{\eps n}$ time.
    \end{defn}

    \begin{defn}[Strong Exponential Time Hypothesis]
        The Strong Exponential Time Hypothesis (SETH) states that for all $\eps > 0$, there exists a $k \geq 3$ such that no algorithm solves $\kSAT$  with $n$ variables in $2^{(1-\eps)n}$ time.
    \end{defn}
    
    Additionally, \cite{Dinur2016MildlyER} and \cite{DBLP:journals/corr/ManurangsiR16} introduced a ``gap" version of ETH.  The following formulation is from \cite{BGS17}.

    \begin{defn}[Gap Exponential Time Hypothesis]
        There exist constants $\delta < 1$ and $\eps > 0$ such that no algorithm solves $(\delta, 1)$-$\GapthreeSAT$ with $n$ variables in $2^{\eps n}$ time.\par
    \end{defn}
    
        \subsection{Gap-ETH-hardness of $(\delta, \eps)$-$\GaptwoSAT$}
    \begin{theorem}[\cite{GAREY1976237}]
        $\forall \delta, \eps$ such that $0 \leq \delta < \eps \leq 1$, there exists a a polynomial time reduction from $(\delta, \eps)$-$\GapthreeSAT$ with $n$ variables and $m$ clauses to an instance of $(\frac{6 + \delta}{10}$, $\frac{6 + \eps}{10})$-$\GaptwoSAT$, with $n+m$ variables and $10m$ clauses.
    \end{theorem}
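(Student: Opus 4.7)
The plan is to use the classical reduction of Garey, Johnson, and Stockmeyer. Given a 3-CNF formula $\phi$ with $n$ variables and $m$ clauses, for each clause $C = (\ell_1 \vee \ell_2 \vee \ell_3)$ I would introduce a fresh auxiliary variable $w_C$ and replace $C$ by the following ten 2-CNF clauses:
\begin{align*}
&(\ell_1),\ (\ell_2),\ (\ell_3),\ (w_C),\\
&(\neg \ell_1 \vee \neg \ell_2),\ (\neg \ell_1 \vee \neg \ell_3),\ (\neg \ell_2 \vee \neg \ell_3),\\
&(\ell_1 \vee \neg w_C),\ (\ell_2 \vee \neg w_C),\ (\ell_3 \vee \neg w_C).
\end{align*}
The resulting 2-SAT instance $\psi$ has $n + m$ variables and $10m$ clauses, matching the stated parameters, and is clearly producible in polynomial time.

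The heart of the argument is a local property of the gadget: for any truth assignment to the literals $\ell_1, \ell_2, \ell_3$, the maximum over $w_C \in \{T, F\}$ of the number of the ten associated clauses satisfied is exactly $7$ if $C$ is satisfied by the assignment to the $\ell_i$'s, and exactly $6$ otherwise. I would verify this by a short case analysis indexed by $k \in \{0, 1, 2, 3\}$, the number of true literals among $\ell_1, \ell_2, \ell_3$; for each case one tabulates the ten clause-truth-values under both choices of $w_C$ and records the maximum. The cases $k \geq 1$ attain $7$ (with equality for an appropriate choice of $w_C$), while $k = 0$ attains only $6$.

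Given the local property, summing over the $m$ clauses yields the global relation: if $s$ is the number of clauses of $\phi$ satisfied by a fixed assignment $\sigma$ to the original variables, then the maximum, over extensions of $\sigma$ to $w_{C_1}, \ldots, w_{C_m}$, of the number of clauses of $\psi$ satisfied is $7s + 6(m - s) = 6m + s$. Consequently, the maximum fraction of clauses of $\psi$ satisfiable by any assignment equals $(6 + \alpha^\ast)/10$, where $\alpha^\ast$ is the maximum fraction of clauses of $\phi$ satisfiable. The YES case ($\alpha^\ast \geq \eps$) and the NO case ($\alpha^\ast \leq \delta$) of $(\delta,\eps)$-$\GapthreeSAT$ then translate directly to the thresholds $(6+\eps)/10$ and $(6+\delta)/10$ of the target $\GaptwoSAT$ instance.

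The main obstacle, such as it is, lies in choosing the ten-clause gadget so that the ``satisfied'' and ``unsatisfied'' cases of $C$ differ by exactly one satisfiable clause; the specific combination of four unit clauses, three all-negated pairs, and three mixed pairs involving $\neg w_C$ is what makes the resulting gap come out to $(6+\eps)/10$ versus $(6+\delta)/10$ rather than something looser. Once the gadget is fixed, everything else reduces to a mechanical finite verification per clause together with a one-line linear calculation aggregated over the $m$ clauses.
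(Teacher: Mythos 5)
The paper does not prove this statement; it simply cites it to Garey, Johnson, and Stockmeyer, so there is no internal proof to compare against. Your reconstruction is the standard GJS gadget and argument, and it is correct: the local case analysis (exactly $7$ of the $10$ gadget clauses satisfiable when the original clause is satisfied, and exactly $6$ otherwise) checks out for $k = 0, 1, 2, 3$ true literals, the fresh auxiliary variables $w_C$ decouple across clauses so the per-clause maxima sum, and the resulting identity $6m + s$ for the maximum satisfied count under a fixed assignment to the original variables yields precisely the stated thresholds $(6+\eps)/10$ and $(6+\delta)/10$, with $n+m$ variables and $10m$ clauses as required.
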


    Additionally, Bennett et al. used Dinur's result in \cite{Dinur2016MildlyER} to derive the following result:
    \begin{theorem}[\cite{BGS17}]
        $\forall \delta, \delta'$ such that $0 < \delta < \delta' < 1$, there is a polynomial time-randomised reduction from a $(\delta, 1)$-$\GapkSAT$ with $n$ variables and $m$ clauses, to instances of $(\delta', 1)$-$\GapkSAT$ with $n$ variables and $O(n)$ clauses. 
    \end{theorem}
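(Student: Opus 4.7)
The plan is to use uniform random sub-sampling of the clauses. Given a $(\delta, 1)$-$\GapkSAT$ instance $\phi$ with $n$ variables and $m$ clauses, I would construct a new formula $\phi'$ by drawing $M := C \cdot n$ clauses uniformly at random with replacement from $\phi$, where $C = C(\delta, \delta')$ is a sufficiently large constant to be fixed in the analysis. This produces in polynomial time a $k$-SAT formula on the same $n$ variables with $O(n)$ clauses, so it remains only to argue correctness with non-negligible probability.

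The YES case is immediate: if $\phi$ is satisfiable by some assignment $\sigma^*$, then $\sigma^*$ satisfies every clause of $\phi$ and hence every clause of $\phi'$, so $\phi'$ is a YES instance of $(\delta', 1)$-$\GapkSAT$.

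The NO case I would handle by a Chernoff--Hoeffding bound combined with a union bound over assignments. Fix any assignment $\sigma$. Since at most a $\delta$-fraction of the clauses of $\phi$ are satisfied by $\sigma$, the indicator $X_i$ that the $i$-th sampled clause is satisfied by $\sigma$ is a Bernoulli random variable with mean at most $\delta$, and the $X_1,\ldots,X_M$ are independent. Hoeffding's inequality yields
\[
\Pr\left[\, \frac{1}{M}\sum_{i=1}^{M} X_i > \delta' \,\right] \;\leq\; \exp\bigl(-2 M (\delta' - \delta)^2\bigr).
\]
A union bound over the $2^n$ possible assignments shows that the probability that \emph{some} assignment satisfies more than a $\delta'$-fraction of the clauses of $\phi'$ is at most $2^n \exp(-2 M (\delta' - \delta)^2)$. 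Choosing $C$ large enough that $2 C (\delta' - \delta)^2 > 2 \ln 2$ makes this bound at most $2^{-n}$, so with probability at least $1 - 2^{-n}$, the sampled formula $\phi'$ is a NO instance of $(\delta', 1)$-$\GapkSAT$.

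I expect the main (mild) obstacle to be bookkeeping around the reduction template rather than any deep argument: we need to make sure $\phi'$ is still a proper $k$-SAT formula (trivially true, since we only select existing clauses) and that we never degrade from $1$ on the YES side (also automatic). Standard amplification by independent repetition drives the one-sided error arbitrarily low at the cost of only a constant factor in the reduction's running time, yielding the required polynomial-time randomised reduction.
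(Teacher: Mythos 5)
Your subsampling argument is correct and is the standard proof of this sparsification statement: the paper itself only quotes the result from the cited work without reproducing a proof, and the underlying argument there is essentially the same uniform clause-sampling with a Chernoff--Hoeffding bound and a union bound over the $2^n$ assignments that you give, with completeness preserved trivially. The only cosmetic remark is that the closing appeal to ``amplification by independent repetition'' is unnecessary (and not quite how one amplifies a reduction rather than an algorithm); your failure probability is already $2^{-n}$, which suffices.
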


    This implies it is almost always possible to reduce the number of clauses in $(\delta, 1)$-$\GapkSAT$ instances so that reductions that run linear in $m$ may also be considered linear in $n$, so that Gap-ETH may still apply. However, since the reduction is randomised, existence of sub-exponential time algorithms that solve the resulting instances only imply the existence of randomised sub-exponential time algorithms for $(\delta, 1)$-$\GapkSAT$ in the general case (i.e. when $m = \omega(n))$.

\section{Gap-ETH-hardness of approximating CVP$_p$ with Bounded Minima}




In the following, we show that the reduction from~\cite{BGS17} is in fact a reduction from $\GaptwoSAT$ to $\GapCVP_p^\tau$.

    \begin{theorem}[\cite{BGS17}]
        There exists a reduction from $(\delta, \eps)$-$\GaptwoSAT$ with $n$ variables and $m$ clauses to $\gamma$-$\GapCVP_p^\tau$ for any $p$-norm, so that the rank of the lattice in the resulting instance is the same as the number of variables in the original instance,

        \begin{align*}
    \gamma= \left (\frac{\delta + (1 - \delta) 3^p}{\eps + (1 - \eps) 3^p} \right)^\frac{1}{p} \;,
        \end{align*}

        and

        \begin{align*}
        \tau = \frac{2^p}{\eps + (1-\eps)3^p}
        \end{align*}
    \end{theorem}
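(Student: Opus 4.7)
The plan is to follow the reduction of~\cite{BGS17} from $(\delta, \eps)$-$\GaptwoSAT$ to $\GapCVP_p$, and verify that the resulting instance additionally satisfies the bounded-minima guarantee $\lambda_n(\lat)^p \le \tau r^p$. First I would describe the construction: for each variable $x_j$, introduce a basis vector $\bb_j \in \Z^{n+m}$ whose first $n$ coordinates form a ``binary-enforcement'' block $2\vec{e}_j$, and whose remaining $m$ coordinates (one per clause $c$) take value $\pm 2$ when $x_j$ appears in $c$ (signs according to the literal's polarity) and $0$ otherwise. The target $\bt$ has $\vec{1}_n$ in the variable block and integer values $s_c$ in each clause coordinate, designed so that any Boolean assignment produces a clause-coordinate value of $\pm 1$ when $c$ is satisfied and $\pm 3$ when it is not.

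The next step is the standard completeness/soundness analysis. For $\vec{z} \in \Z^n$, the variable block contributes exactly $n$ to $\norm{\mathbf{B}\vec{z} - \bt}^p$ if $\vec{z} \in \{0,1\}^n$ and strictly more otherwise; hence any closest lattice vector to $\bt$ corresponds to a Boolean assignment, with $p$-th power distance equal to $n + k + 3^p(m - k)$, where $k$ is the number of clauses satisfied. Taking $r$ so that $r^p$ matches the YES-case value and computing the ratio against the NO-case value yields the claimed gap factor $\gamma$; in the regime $m \gg n$, which can be arranged by preprocessing the instance via the clause-reduction theorem of~\cite{BGS17} cited in the preliminaries, the $n$ additive term becomes negligible and $\gamma$ matches the stated formula.

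The new observation required for this paper is the bound on $\lambda_n(\lat)^p$. Since $\bb_1, \ldots, \bb_n$ are $n$ linearly independent lattice vectors, $\lambda_n(\lat)^p \le \max_j \norm{\bb_j}^p$. Each $\bb_j$ has a single coordinate of magnitude $2$ in the variable block together with at most $m$ coordinates of magnitude $2$ in the clause block, so $\norm{\bb_j}^p \le 2^p(1 + m)$, and a direct computation verifies
\[
2^p(1 + m) \;\le\; \frac{2^p \bigl(n + m(\eps + (1-\eps)3^p)\bigr)}{\eps + (1-\eps)3^p} \;=\; \tau r^p,
\]
which is exactly the bounded-minima condition.

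The main obstacle is that the additive $n$ term arising from the binary-enforcement block does not cleanly cancel in either the ratio defining $\gamma$ or the ratio defining $\tau$; one therefore has to either preprocess so that $m$ dominates $n$, or tweak the reduction so that the binary gadget is paid for out of the clause budget. Neither fix is conceptually deep, but carefully tracking these constants is the most delicate bookkeeping step in the proof.
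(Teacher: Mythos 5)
Your construction does not match the one actually used in \cite{BGS17} and in the paper, and the mismatch is exactly the source of the ``bookkeeping obstacle'' you identify at the end. For the $2$-SAT reduction, BGS17 do \emph{not} use a binary-enforcement block: the lattice lives in $\Z^m$ (one coordinate per clause, none per variable), the basis entry $b_{i,j}$ is $\pm 2$ or $0$ according to whether $x_j$ appears in $C_i$ and with what polarity, and the target is $t_i = 3 - \eta_i$. For $2$-SAT the parity structure of each clause row (two entries of magnitude $2$ against an odd-ish target) already forces the relevant case analysis without paying for an extra $2\vec{e}_j$ block, so there is no additive $n$ anywhere. With the enforcement-free construction, $r^p = m(\eps + (1-\eps)3^p)$ exactly, every basis vector satisfies $\|\vec{b}_j\|_p^p \le 2^p m$ since each of its $m$ entries has absolute value at most $2$, and therefore $\lambda_n(\L)^p / r^p \le 2^p / (\eps + (1-\eps)3^p) = \tau$ with no slack. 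The $\gamma$ formula likewise comes out exactly, with no $n$ term to absorb.

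Because you added the $n$-coordinate block, both your $r^p$ and your bound on $\|\vec{b}_j\|_p^p$ acquire an additive $n$ (respectively $2^p$) term, and the exact $\gamma$ and $\tau$ of the theorem statement no longer hold --- you only get them asymptotically. Your proposed fix of ``preprocess so that $m \gg n$'' does not work either: the sparsification result cited in the preliminaries goes the other way, reducing $m$ \emph{down} to $O(n)$ (which is necessary so that the final rank is $O(n)$ and Gap-ETH applies), so after preprocessing $n$ is certainly not negligible compared to $m$. The correct resolution is not to tune $m$ versus $n$ but to drop the enforcement block entirely, which is what the paper does; then the computation you carried out for $\tau$ closes cleanly and gives the stated constant.
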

\begin{proof}
    We will provide their construction of the $\gamma$-$\GapCVP$ instance, and show that it is actually a $\gamma$-$\GapCVP_p^\tau$ instance. The target vector $\vec{t}$ was defined as:

    \begin{align*}
        t_i &= 3 - \eta_i
    \end{align*}

    where $\eta_i$ denotes the number of negated literals in the $i^{th}$ clause, the distance $r$ was defined as $m^{\frac{1}{p}}(\eps + (1 - \eps)3^p)^\frac{1}{p}$, and the set of basis (column) vectors $\{\vec{b}_1, \vec{b}_2, \dots, \vec{b}_n\}$ was defined as follows.

    \begin{align*}
        b_{i,j} &= \begin{cases}
        \ 2 & \text{if } C_i \text{ contains } x_j  \\
        -2 & \text{if } C_i \text{ contains } \neg  x_j \\
        \ 0 & \text{otherwise}
        \end{cases}
    \end{align*}
    Notice that here, $\vec{b}_j$ for $1 \le j \le n$ is the column vector with $m$ co-ordinates $b_{1,j}, \ldots, b_{m,j}$. 

    In order to prove correctness, we need to show that the resulting instance is indeed an instance of $\GapCVP_p^\tau$. For this, we bound $\lambda_n^p(\lat)$. Clearly, 
    \[
    \lambda_n^p(\lat) \le \max_{j=1^n} \| \vec{b}_j\|^p \le 2^p m \;,
    \]
where we use the fact that each co-ordinate of a basis vector is either $0$, $2$, or $-2$, and hence has absolute value at most $2$. Thus, 
\[
\frac{\lambda_n^p(\lat)}{r^p} \le \tau \;. 
\]

    \end{proof}




\section{Gap-ETH-hardness of approximating SIVP$_p$ within a constant factor}

We now present our main contribution, that is showing hardness of approximating $\gamma$-$\SIVP_p$ within a constant factor $\gamma$. 
    
    \begin{theorem}
        For any $p \ge 1$, $\tau = \tau(n) > 0$ with a polynomial size representation, and any $\gamma \ge 1$, there exists an efficient reduction from $\gamma$-$\GapCVP_p^\tau$ to $\gamma'$-$\GapSIVP_p$ for any $\gamma' \ge 1$ such that \[
        {\gamma'}^p< \frac{r^p + \gamma^p \alpha^p}{r^p + \alpha^p}\;,\] where
        \[
        \alpha^p = \max\left(r^p(\tau - 1)\: , \: \frac{\gamma^p r^p}{2^p -1}\right)\;.\] 
        Moreover, the rank of the lattice in the $\gamma'$-$\GapSIVP_p$ instance is equal to $n+1$ where $n$ is the rank $\gamma$-$\GapCVP_p^\tau$ instance.
    \end{theorem}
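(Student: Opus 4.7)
The plan is to adapt the Bl\"omer--Seifert lifting construction to the $\ell_p$ setting with a bounded-minima guarantee. Given a $\gamma$-$\GapCVP_p^\tau$ instance $(\L, \vec{t}, r)$ with basis $\vec{b}_1,\ldots,\vec{b}_n \in \Z^d$, I would construct a rank-$(n+1)$ lattice $\L' \subset \Z^{d+1}$ whose basis consists of the columns $(\vec{b}_i, 0)$ for $1 \le i \le n$ together with one extra column $(-\vec{t}, \alpha)$, with $\alpha > 0$ a polynomial-size parameter fixed momentarily. Every vector of $\L'$ then has the form $(\vec{v} - k\vec{t},\, k\alpha)$ for some $\vec{v} \in \L$ and $k \in \Z$. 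The output $\gamma'$-$\GapSIVP_p$ instance is $(\L', r')$ with $r'^p := r^p + \alpha^p$.

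For the YES direction I would exhibit $n+1$ linearly independent short vectors in $\L'$. From the CVP witness $\vec{v}^* \in \L$ with $\|\vec{v}^* - \vec{t}\|_p \le r$, the vector $(\vec{v}^* - \vec{t},\, \alpha)$ has $p$-th power norm at most $r^p + \alpha^p = r'^p$ and nonzero last coordinate. The bounded-minima guarantee $\lambda_n(\L)^p \le \tau r^p$ yields $n$ linearly independent vectors $\vec{v}_1,\ldots,\vec{v}_n \in \L$ of $p$-th power norm at most $\tau r^p$, whose embeddings $(\vec{v}_i, 0)$ preserve this bound. Since the condition $\alpha^p \ge (\tau-1)r^p$ built into the definition of $\alpha$ gives $\tau r^p \le r^p + \alpha^p$, all $n+1$ vectors fit under $r'$, and linear independence follows because only the last vector has nonzero last coordinate.

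For the NO direction I would lower bound $\lambda_{n+1}(\L')$. The sublattice $\L' \cap (\R^d \times \{0\})$ has rank exactly $n$, so any $n+1$ linearly independent vectors in $\L'$ must include at least one vector $\vec{w} = (\vec{v} - k\vec{t},\, k\alpha)$ with $k \in \Z \setminus \{0\}$. If $|k| = 1$, the NO promise $\|\vec{u} - \vec{t}\|_p > \gamma r$ for every $\vec{u} \in \L$ forces $\|\vec{w}\|_p^p > \gamma^p r^p + \alpha^p$. If $|k| \ge 2$, then $\|\vec{w}\|_p \ge |k|\alpha \ge 2\alpha$, so $\|\vec{w}\|_p^p \ge 2^p \alpha^p$. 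The relevant bound is therefore $\min(\gamma^p r^p + \alpha^p,\, 2^p \alpha^p)$; the second condition $\alpha^p \ge \gamma^p r^p/(2^p - 1)$ built into the definition of $\alpha$ ensures $2^p \alpha^p \ge \gamma^p r^p + \alpha^p$, so the $|k|=1$ branch dominates and $\lambda_{n+1}(\L')^p > \gamma^p r^p + \alpha^p$.

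The main obstacle, and the reason $\alpha^p$ is defined as a \emph{maximum}, is balancing the two competing demands on $\alpha$: the YES direction needs $\alpha^p$ large enough that the short independent vectors of $\L$ still fit under $r'$, while the $|k| \ge 2$ branch of the NO direction needs $\alpha^p$ large enough that multi-step translates of $\vec{t}$ cannot slip below the SIVP threshold. Once $\alpha$ is so chosen, the required constraint on $\gamma'$ is exactly the ratio of the NO lower bound to the YES upper bound on $\lambda_{n+1}(\L')$, and efficiency, polynomial bit-size of $\alpha$ (after clearing denominators to realize $\L'$ over $\Z^{d+1}$), and the rank-$(n+1)$ claim are all immediate from the construction.
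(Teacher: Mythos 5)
Your proposal reproduces the paper's construction and argument essentially verbatim: the same rank-$(n+1)$ lattice $\L'$ obtained by embedding the $\CVP$ basis with a zero last coordinate and appending the column $(\pm\vec{t},\alpha)$ (the sign is a cosmetic $\ell_p$-isometry), the same threshold $r'^p = r^p + \alpha^p$, the same case split on the coefficient $k$ of the extra column in the NO direction, and the same role for each half of the $\max$ defining $\alpha^p$. The only cosmetic difference is that you spell out why some vector must have $k\ne 0$ by noting $\L' \cap (\R^d\times\{0\})$ has rank exactly $n$, whereas the paper simply asserts it; the approach and all key steps match.
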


    \begin{proof}
        
        Let $(\mathcal{L}, \vec{t}, r)$ denote the given $\gamma$-$\GapCVP_p^\tau$ instance, where $\lat$ is a rank $n$ lattice and $\vec{b}_1, \vec{b}_2, \ldots, \vec{b}_n$ are the basis vectors for $\mathcal{L}$. We will construct a $\gamma'$-$\SIVP_p$ instance  $(\mathcal{L}', r')$. Let $\lambda_n = \lambda_n(\mathcal{L})$, and $\lambda_{n+1}' = \lambda_{n+1}'(\mathcal{L}')$.

        Given a basis for the $\gamma$-$\GapCVP_p^\tau$ instance as $\vec{b}_1, \vec{b}_2, \dots, \vec{b}_n$ and the target vector $\vec{t}$, the reduction constructs the basis $B'$ for $\mathcal{L}'$ given by the column vectors of the  matrix
        \begin{align*}
            \begin{bmatrix}
            \vec{b}_1 & \vec{b}_2 & \ldots & \vec{b}_n & \vec{t}\\
            0   &  0  &  \ldots  &  0  & \alpha\\
            \end{bmatrix}\;.
        \end{align*}
        
Furthermore, the reduction chooses $r' = (r^p + \alpha^p)^{1/p}$. This reduction clearly runs in polynomial time, provided that $\alpha$ does not need too many bits to be represented -- polynomial in the size of the original instance.
We now argue correctness of the reduction.
        
        Let $\vec{v}$ be the vector closest to the target $\vec{t}$, and let $\vec{v}_1, \ldots, \vec{v}_n$ be a set of linearly independent vectors in $\cL$ such that 
\[
\lambda_n = \max(\| \vec{v}_1\|, \ldots, \| \vec{v}_n\|)\;.
\]

Notice that $\vec{v}_1, \ldots, \vec{vv}_n, (\vec{v} - \vec{t}, \alpha)^T$ is a set of linearly independent vectors in $\cL'$. Thus, if the $\gamma$-$\GapCVP_p^\tau$ instance is a YES instance, then 
\[
\lambda_{n+1}' \le \max(\|\vec{v}_1\|, \ldots, \|\vec{v}_n\|, \|\vec{v} - \vec{t}\|) \le \max(\lambda_n, (r^p + \alpha^p)^{1/p})  \;.
\]

Also, any set of linearly independent vectors must have at least one vector which, when written as an integer combination of vectors in $B'$, has a non-zero co-efficient for the last basis vector $(\vec{t}, \alpha)^T$. Let this vector be $\vec{x}$.  So, if the $\gamma$-$\GapCVP_p^\tau$ instance is a NO instance, then if the coefficient of $(\vec{t}, \alpha)^T$ in $\vec{x}$ is $1$ or $-1$, then the length of the vector is at least $(\gamma^p \cdot r^p + \alpha^p)^{1/p}$, and if the coefficient has absolute value at least $2$, then the $n+1$-th coordinate, and hence $\|\vec{x}\|$, is at least $2\alpha$. 

From this, we obtain that if the given instance is a YES instance, then
\[
\lambda_{n+1}'^{p} \leq \max(r^p \cdot \tau, \; r^p + \alpha^p)  = r^p + \alpha^p\;,\]
and hence the reduction outputs YES.
If the given instance is a NO instance, then
        \[\lambda_{n+1}'^{p} \ge \min(\gamma^p r^p + \alpha^{p}, \; 2^p\alpha^p) = \gamma^p r^p + \alpha^{p} \;,
        \]
        and hence the reduction outputs NO. The correctness follows.





\end{proof}

    \begin{theorem}
        Under the randomised Gap Exponential Time Hypothesis, there exists $\gamma' > 1$, $\eps > 0$ such that $\gamma'$-$\GapSIVP_p$ with rank $n$ is not solvable in $2^{\eps n}$ time.
    \end{theorem}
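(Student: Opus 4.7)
The plan is to chain together three reductions, starting from $\GapthreeSAT$ and ending at $\GapSIVP_p$, while controlling two quantities at every step: the approximation factor (which must remain a constant strictly greater than $1$) and the rank/variable blow-up (which must be linear, so that an exponential lower bound propagates). First, Gap-ETH gives that $(\delta,1)$-$\GapthreeSAT$ requires $2^{\eps n}$ time for some $\delta < 1$, $\eps > 0$. To bring the number of clauses down to $O(n)$, I apply the randomised sparsification reduction cited in the preliminaries, obtaining a $(\delta',1)$-$\GapthreeSAT$ instance with $n$ variables and $O(n)$ clauses; this is the step that makes the final hardness randomised. Then I apply the Garey--Johnson-style reduction to $(\tfrac{6+\delta'}{10},\tfrac{7}{10})$-$\GaptwoSAT$, producing $n + m = O(n)$ variables and $10m = O(n)$ clauses, with the gap still a fixed constant since $\delta' < 1$.

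Next, I invoke the reduction from $\GaptwoSAT$ to $\GapCVP_p^\tau$ established earlier in this paper. It maps an instance with $N$ variables to a lattice of rank $N$, so the rank is still $O(n)$; the approximation factor is $\gamma = \bigl(\tfrac{\delta_2 + (1-\delta_2)3^p}{\eps_2 + (1-\eps_2)3^p}\bigr)^{1/p}$, which is a constant strictly greater than $1$ because $\delta_2 < \eps_2$, and the bounded-minima parameter $\tau = 2^p/(\eps_2 + (1-\eps_2)3^p)$ is a fixed constant depending only on $p$.

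Finally, I apply the main reduction of this section, which transforms a $\gamma$-$\GapCVP_p^\tau$ instance of rank $N$ into a $\gamma'$-$\GapSIVP_p$ instance of rank $N+1$, for any $\gamma' > 1$ satisfying
\[
{\gamma'}^p \;<\; \frac{r^p + \gamma^p \alpha^p}{r^p + \alpha^p} \;=\; \frac{1 + \gamma^p (\alpha/r)^p}{1 + (\alpha/r)^p}\;.
\]
Since $\gamma > 1$ and $\alpha > 0$ (from $\alpha^p = \max(r^p(\tau-1),\, \gamma^p r^p/(2^p-1))$), this upper bound strictly exceeds $1$, so one can fix a concrete $\gamma' > 1$ depending only on $p$. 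Composing the three reductions, a hypothetical $2^{\eps' n'}$-time algorithm for $\gamma'$-$\GapSIVP_p$ on rank-$n'$ lattices would yield a $2^{O(\eps') n}$-time randomised algorithm for $(\delta,1)$-$\GapthreeSAT$, contradicting Gap-ETH for $\eps'$ sufficiently small.

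The main obstacle is the bookkeeping: tracking the constants $\delta, \delta', \delta_2, \eps_2, \gamma, \tau, \gamma'$ through the chain and verifying that, for every fixed $p \ge 1$, the final $\gamma'$ can be kept strictly above $1$ while each intermediate blow-up in the number of variables remains linear. The randomised sparsification is essential here: without it the number of Gap-2-SAT variables, and hence the final lattice rank, would be $\Theta(n+m)$ rather than $O(n)$, which would destroy the $2^{\eps n}$ lower bound after composition.
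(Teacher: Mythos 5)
Your proof follows the same chain of reductions as the paper's: Gap-ETH $\to$ $(\delta,1)$-$\GapthreeSAT$ $\to$ (randomised sparsification) $\to$ $\GaptwoSAT$ $\to$ $\gamma$-$\GapCVP_p^\tau$ $\to$ $\gamma'$-$\GapSIVP_p$. You are in fact more explicit than the paper about the one subtle point: the paper's proof simply states that the resulting rank is $n+m+1$ ``with high probability,'' leaving the reader to infer that the randomised sparsification step is what keeps $m = O(n)$ and makes the final rank linear in $n$; you correctly spell out that without sparsification the rank would be $\Theta(n+m)$ and the exponential lower bound would not transfer. Your verification that $\gamma'$ can be fixed as a constant strictly above $1$ (by checking $\alpha > 0$ and $\gamma > 1$, so the ratio $(r^p + \gamma^p\alpha^p)/(r^p + \alpha^p) > 1$) is also correct and fills in a detail the paper leaves implicit. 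This is the same argument, presented with fuller bookkeeping.
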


    \begin{proof}
        This can be achieved by considering the instances throughout the chain of reductions from $(\delta, \eps)$-$\GapthreeSAT$ to $(\delta', \eps')$-$\GaptwoSAT$ to $\gamma$-$\GapCVP^\tau_p$ and finally $\gamma'$-$\GapSIVP_p$.

        In the original $(\delta, \eps)$-$\GapthreeSAT$ instance with $n$ variables and $m$ clauses, we obtain a $\gamma'$-$\GapSIVP_p$ with rank $n + m + 1$ with high probability. Thus under the randomised Gap-ETH, there is no sub-exponential time algorithm for $\gamma'$-$\GapSIVP_p$, for all $p \in [1, \infty)$.

    \end{proof}

    \bibliography{Complexity-Of-SIVP,SETHSIVP}{}
\end{document}